\documentclass[12pt]{article}
\pdfoutput=1
\usepackage{amsmath}
\usepackage{amsfonts}
\usepackage{amssymb}
\usepackage{color}
\usepackage{authblk}
\usepackage{hyperref}
\usepackage{verbatim}
\usepackage{graphicx}
\usepackage{epsfig}
\usepackage{epsf}

\usepackage{amsthm}
\usepackage[small, margin=1cm]{caption}

\usepackage{soul,color}

\usepackage{bm}
\usepackage{multirow}
\usepackage[round]{natbib}

\setcounter{MaxMatrixCols}{10}

\makeatletter
\@addtoreset{equation}{section}
\makeatother

\newtheorem{theorem}{\bf Theorem}[section]

\newtheorem{algorithm}[theorem]{Algorithm}

\newtheorem{example}[theorem]{Example}

\newtheorem{proposition}[theorem]{Proposition}

\oddsidemargin 0.0in
\textwidth 6.57in
\leftmargin -0.12in
\topmargin -0.25in
\textheight 8.35in

\theoremstyle{plain}
\newtheorem{thm}{Theorem}[section]

\newtheorem{prop}[thm]{Proposition}
\theoremstyle{remark}







\newcommand{\E}[1]{{\bf E}\left(#1\right)}

\newcommand{\beq}{\begin{eqnarray}}
\newcommand{\beqs}{\begin{eqnarray*}}
\newcommand{\eeq}{\end{eqnarray}}
\newcommand{\eeqs}{\end{eqnarray*}}

\newcommand{\sign}{\mbox{sign}}
\newcommand{\median}{\mbox{median}}
\newcommand{\argmax}{\operatornamewithlimits{argmax}}

\newcommand{\remove}[1]{}

\title{A penalized likelihood approach for robust estimation of isoform expression}

\author[1,3,*]{Hui Jiang}
\author[2,4,*]{Julia Salzman}
\affil[1]{Department of Biostatistics, University of Michigan}
\affil[2]{Department of Biochemistry, Stanford University}
\affil[3]{Center for Computational Medicine and Bioinformatics, University of Michigan}
\affil[4]{Stanford Cancer Institute, Stanford University}
\affil[*]{Please send correspondence to jianghui@umich.edu and julia.salzman@stanford.edu.}

\begin{document}

\maketitle

\abstract{Ultra high-throughput sequencing of transcriptomes (RNA-Seq) has enabled the accurate estimation of gene expression at individual isoform level. However, systematic biases introduced during the sequencing and mapping processes as well as incompleteness of the transcript annotation databases may cause the estimates of isoform abundances to be unreliable, and in some cases, highly inaccurate. This paper introduces a penalized likelihood approach to detect and correct for such biases in a robust manner. Our model extends those previously proposed by introducing bias parameters for reads. An L1 penalty is used for the selection of non-zero bias parameters. We introduce an efficient algorithm for model fitting and analyze the statistical properties of the proposed model. Our experimental studies on both simulated and real datasets suggest that the model has the potential to improve isoform-specific gene expression estimates and identify incompletely annotated gene models.
}

\section{Introduction}

In eukaryotes, a single gene can often produce more than one distinct
transcript isoforms, through an important cell mechanism called
alternative splicing. Alternative splicing can greatly enrich the
diversity of eukaryote transcriptomes~\citep{Wang2008}, especially in
developmental and differentiation programs, and can contribute
to disease when it is dysregulated~\citep{LopezBigas2005}.  Study gene expression at specific transcript isoform level is therefore of great importance and interest to biologists.

Ultra high-throughput sequencing of transcriptomes (RNA-Seq) has enabled
the accurate estimation of gene expression at individual isoform
level~\citep{Wang2008}. As of today, modern ultra high-throughput
sequencing platforms can generate tens of millions of short sequencing reads from
prepared RNA samples in less than a day.  For these reasons, RNA-Seq has
become the method of choice for assays of gene expression. To analyze
increasing amounts of data generated from biological experiments, a number
of statistical models and software tools have been
developed~\citep{Jiang2009, Trapnell2010, Li2011}.  For a review of the
methods for transcript quantification using RNA-Seq, see~\citet{Pachter2011}.

Although these methods have achieved great success in quantifying isoforms
accurately, there are still many remaining challenging issues which may
hinder their wider adoption and successful application by
biologists. Systematic biases introduced during the sequencing and mapping
processes~\citep{Li2010, Hansen2010, Roberts2011} as well as
incompleteness of the transcript annotation databases~\citep{Pruitt2009, Hsu2006}
can cause the estimates of isoform abundances to be unreliable.  For
example, recently, there have been periods of time where hundreds
of new transcripts are discovered every month~\citep{Harrow2012},
including occasional examples of thousands of new isoforms being
identified in a single study, ~\citep{Salzman2012, Salzman2013}.  These
incomplete annotations
can cause the estimates of isoform abundances to be
unreliable~\citep{Pyrkosz2013}.

This paper introduces a penalized likelihood approach to detect and
correct for such biases in a robust manner.  Bias parameters are
introduced for read abundance, and an L1 penalty is used for the selection
of non-zero bias parameters.   We introduce an efficient algorithm for
fitting this model and analyze its statistical properties.  Our experimental studies on both simulated and real
  datasets show that transcript estimates can be highly sensitive to
  including or omitting parameters modeling read bias.  Together, our
  results suggest that this method has the potential
  to improve isoform-specific gene expression estimates and improve
  annotation of existing gene models.

\section{A penalized likelihood approach}

\subsection{The Model}\label{model}
We adopt the notation and extend the model in~\citet{Salzman2011}, which
provides a flexible statistical framework for modeling for both single-end
and paired-end RNA-Seq data, including insert length distributions. To
state the model, for a gene $g$ with $I$ annotated distinct transcript
isoforms, suppose the sequencing reads from $g$ are sampled from $J$
possible distinct read types. A read type refers to a group of reads
(single-end or paired-end) with the same probability of being generated by
sequencing a particular transcript~\citep{Salzman2011}.  We use $\theta$
to be the $I\times1$ vector representing the abudance of the isoforms in
the sample, $A$ to be the $I\times J$ sampling rate matrix with its $(i,j)$-th element $a_{ij}$ denoting the rate that read type $j$ is sampled from isoform $i$. Given $\theta$ and $A$, we assume that the $J\times1$ read count vector $n$, where $n_j$ denotes the number of reads of type $j$ mapped to any of the $I$ isoforms, follows a Poisson distribution
$$n_j|\theta,A\sim \text{Poisson}\left(\sum_{i=1}^I\theta_ia_{ij}\right).$$
The log-likelihood function is therefore
$$l(\theta;n,A)=\sum_{j=1}^J\left\{n_j\ln\left(\sum_{i=1}^I\theta_ia_{ij}\right)
-\sum_{i=1}^I\theta_ia_{ij}\right\},$$
where the term $-\ln(n_j!)$ was dropped because it does not contain $\theta$.

In ~\citep{Salzman2011}, the sampling rate matrix $A$ is a set of
parameters, assumed to be a
known function of the sequencing library and gene. For single-end RNA-Seq data, the simplest model is to assume the uniform sampling model which assigns $a_{ij}$ as $N$ where $N$ is the sequencing depth (proportional to total number of mapped reads) of the experiment if isoform $i$ can generate read type $j$ or assigns $a_{ij}$ as $0$ otherwise. For paired-end RNA-Seq data, an insert length model can be used such that $a_{ij}=q(l_{ij})N$ if read type $j$ can be mapped to isoform $i$ with insert length (fragment length) $l_{ij}$, where $q()$ is the empirical probability mass based on all the mapped read pairs. \citet{Salzman2011} discuss these sampling rate models in more details.

Although these simplified sampling rate models usually work well in
practice, there are systematic biases introduced during the sequencing and
mapping processes which may caused biased estimates of the sampling rates
and consequently biased estimates of isoform abundances.  Several
approaches have been developed to model and correct such
biases~\citep{Li2010, Hansen2010, Roberts2011}. However, completely
removing sampling biases is almost impossible because modeling and
identifying baiases in the technical
procedure of sequencing and read mapping is often too
complex.   Including all possible transcript isoforms (de novo identification) also poses
computational challenges and biases'; using all annotated transcripts in
the model, many times exceeeding 10 per gene, can introduce non-identifyability of
isoforms.  However, while the vast majority of human
genes have multiple annotated (and likely unannotated) transcripts, most
cell types, or single cells, express only a subset of annotated transcripts.

To explore statistical modeling approaches that could improve transcript
quantification with RNA-Seq, we present a flexible model to account for all
different kinds of biases in estimated sampling rates. We assign a bias
parameter $\beta_j$ to each read type $j$ and reparametrize $\beta_j$
as $\beta_j=e^{b_j}$ to constrain $\beta_j>0$.  When $\beta_j=1$, there is no bias for read type $j$. The actual effective sampling rate for read type $j$ from isoform $i$ now becomes $a^\prime_{ij}=a_{ij}\beta_j=a_{ij}e^{b_j}$, and the log-likelihood function is now
\begin{equation}\label{new-log-likelihood}
l(\theta,b;n,A)=\sum_{j=1}^J\left\{n_j\ln\left(\sum_{i=1}^I\theta_ia_{ij}e^{b_j}\right)
-\sum_{i=1}^I\theta_ia_{ij}e^{b_j}\right\}.
\end{equation}

Since the number of observations is $J$, which is smaller than the number
of variables $I$+$J$ in model~(\ref{new-log-likelihood}), the
model~(\ref{new-log-likelihood}) is not identifiable. To solve this problem, we introduce a penalty $p(b)$ on the bias parameters $b$ and formulate an L1-penalized log-likelihood
\begin{equation}\label{penalized-log-likelihood}
\begin{array}{lll}
f(\theta,b)&=&l(\theta,b;n,A)-p(b)\\
&=&\displaystyle{\sum_{j=1}^J\left\{n_j\ln\left(\sum_{i=1}^I\theta_ia_{ij}e^{b_j}\right)
-\sum_{i=1}^I\theta_ia_{ij}e^{b_j}\right\}-\lambda\sum_{j=1}^J|b_j|}
\end{array}
\end{equation}
where $\lambda>0$ is a tuning parameter.

Introducing the L1 penalty shrinks $b$
towards $0$, consequently inflating $\theta$ compared to fitting a model
with sparse
but unbiased estimation of the parameters $b$. One way to reduce such
bias in the estimation of $\theta$ is to use a two-step approach for model fitting: first fit the
model with the L1-penalized model, then fit the model without the L1
penalty, retaining only the non-zero $b_j$'s as model parameters. Clearly, to avoid
nonidentifiable issues, the number of non-zero $b_j$'s must be smaller or
equal than $J-I$, which can be achieved by increasing the tuning parameter $\lambda$. The statistical properties of the two-step approach is discussed in Section~\ref{statistical-properties}.

Because $J$ (the number of distinct read types) is usually very large,
especially for paired-end RNA-Seq data, we adopt the collapsing technique
introduced in~\citet{Salzman2011}. We merge read types of proportional
sampling rate vectors into read categories (which are minimal sufficient
statistics of the model). This does not change the
model~(\ref{penalized-log-likelihood}) except that $j$ now represents a read
category rather than a read type. \citet{Salzman2011} also introduced
another data reduction technique which ignores all the read categories
with zero read counts by introducing an additional term with the total sampling
rates for each isoform $w_i=\sum_{j=1}^Ja_{ij}$. In this case, the log-likelihood function becomes
\begin{equation}\label{log-likelihood2}
l(\theta;n,A,w)=\sum_{n_j>0}\left\{n_j\ln\left(\sum_{i=1}^I\theta_ia_{ij}\right)\right\}-\sum_{i=1}^I\theta_iw_i.
\end{equation}
For simplicity, we will not discuss model~(\ref{log-likelihood2}) in this
paper but our approach easily extends to deal with model~(\ref{log-likelihood2}).

\subsection{Optimization}

In this section we develop an efficient algorithm for fitting the model, i.e., maximizing the L1-penalized log-likelihood function~(\ref{penalized-log-likelihood}) $(\theta,b) = \argmax_{(\theta,b)}f(\theta,b)$.

\begin{proposition}\label{convexity}
	The L1-penalized log-likelihood function~$(\ref{penalized-log-likelihood})$ $f(\theta,b)$ is biconcave.
\end{proposition}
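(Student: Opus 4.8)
The plan is to verify the two defining properties of biconcavity separately: that $f(\theta,b)$ is concave in $\theta$ for each fixed $b$, and concave in $b$ for each fixed $\theta$. Throughout I will use that the entries $\theta_i$, $a_{ij}$, and $n_j$ are all nonnegative, since they represent abundances, sampling rates, and read counts, respectively.

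First I would treat concavity in $\theta$ with $b$ held fixed. Writing $c_{ij} := a_{ij} e^{b_j}$, which are nonnegative constants, the objective becomes $\sum_j \{ n_j \ln(\sum_i \theta_i c_{ij}) - \sum_i \theta_i c_{ij} \}$ minus the penalty, which does not depend on $\theta$. The inner map $\theta \mapsto \sum_i \theta_i c_{ij}$ is affine and $\ln$ is concave, so each composition $\ln(\sum_i \theta_i c_{ij})$ is concave; scaling by $n_j \ge 0$ and summing preserves concavity. The remaining term $-\sum_i \theta_i c_{ij}$ is linear in $\theta$, hence concave. Thus $f(\cdot,b)$ is concave.

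The key step is concavity in $b$ with $\theta$ fixed, and here the essential observation is that $e^{b_j}$ does not depend on the summation index $i$, so it factors out of the inner sum: $\sum_i \theta_i a_{ij} e^{b_j} = e^{b_j} D_j$ with $D_j := \sum_i \theta_i a_{ij} \ge 0$. Consequently the log term collapses to $n_j \ln(e^{b_j} D_j) = n_j b_j + n_j \ln D_j$, which is affine (hence concave) in $b_j$, while the subtracted term becomes $-D_j e^{b_j}$, which is concave because $e^{b_j}$ is convex and $D_j \ge 0$. Finally, the penalty contributes $-\lambda |b_j|$, and $-|b_j|$ is concave since $|\cdot|$ is convex and $\lambda > 0$. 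Summing these concave pieces over $j$ gives concavity of $f(\theta,\cdot)$.

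The argument is a routine composition-of-concave-functions calculation, so I do not anticipate a genuine obstacle; the one point worth emphasizing is that concavity in $b$ hinges entirely on the factorization of $e^{b_j}$ out of the sum over isoforms. Were the bias instead modeled at the isoform-read-type level (a parameter $b_{ij}$), the log term would no longer reduce to an affine function of the bias parameters and the second half of the argument would fail. I would therefore state this factorization explicitly as the crux of the proof. Note also that biconcavity is weaker than joint concavity and does not by itself guarantee a unique maximizer, which is consistent with the identifiability caveats raised around model~$(\ref{penalized-log-likelihood})$.
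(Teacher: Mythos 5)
Your proof is correct and follows essentially the same route as the paper: both arguments hinge on factoring $e^{b_j}$ out of the inner sum so that the log term splits into a concave (log-of-affine) function of $\theta$ plus a term affine in $b_j$, with the remaining piece $-\sum_i\theta_i a_{ij}e^{b_j}$ concave in each block separately and the penalty $-\lambda|b_j|$ concave. Your write-up is somewhat more explicit than the paper's (checking the two directions of biconcavity separately and flagging the nonnegativity of $n_j$, $\theta_i$, $a_{ij}$), but there is no substantive difference in approach.
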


Because $f(\theta,b)$ is biconcave, we use Alternative Concave Search
(ACS) to solve for $\theta$ and $b$, by alternatively fixing one of them
and optimize for the other. The sequence of function values generated by
the ACS algorithm is monotonically increasing and $f(\theta, b)$ is
bounded from above (because it is a penalized log-likelihood function),
conditions which guarentee convergence of the ACS algorithm.

\begin{algorithm}\label{EM}
	With $b$ fixed, $\theta$ can be solved with the following EM algorithm
	\begin{eqnarray*}
		\mbox{E-step: }&\hat{n}_{ij}^{(k+1)}&\mathrel{\mathop:}=\E{n_{ij}|n,A,b,\theta^{(k)}}=\displaystyle{\frac{n_j\theta_i^{(k)}a_{ij}}{\sum_{i=1}^I\theta_i^{(k)}a_{ij}}}\\
		\mbox{M-step: }&\theta_i^{(k+1)}&=\displaystyle{\frac{\sum_{j=1}^J\hat{n}_{ij}^{(k+1)}}{\sum_{j=1}^Ja_{ij}e^{b_j}}}
	\end{eqnarray*}
\end{algorithm}

Alternatively, $\theta$ can be solved using the more efficient Newton-Raphson algorithm. In our implementation, we only execute one round of the EM iteration each time we optimize $\theta$ with $b$ fixed.

\begin{proposition}\label{solution-b}
With $\theta$ fixed, $b_j$ has the following closed-form solution
\begin{equation}\label{formula-b}
b_j=\ln\left(1+\frac{S_\lambda\left(n_j-\sum_{i=1}^I\theta_ia_{ij}\right)}{\sum_{i=1}^I\theta_ia_{ij}}\right)
\end{equation}
where $S_\lambda(x)=\sign(x)(|x|-\lambda)_+$ is the soft thresholding operator, where $(x)_+=\max(x,0)$.
\end{proposition}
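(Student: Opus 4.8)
The plan is to exploit the fact that, with $\theta$ held fixed, the penalized log-likelihood $(\ref{penalized-log-likelihood})$ is completely separable across the coordinates $b_j$: both the data-fit sum and the penalty $\lambda\sum_j|b_j|$ decompose into terms each depending on a single $b_j$. Hence maximizing over the vector $b$ reduces to $J$ independent one-dimensional problems, and it suffices to solve, for each fixed $j$,
\begin{equation*}
\max_{b_j}\ \left\{ n_j b_j - c_j e^{b_j} - \lambda|b_j|\right\},
\end{equation*}
where I abbreviate $c_j=\sum_{i=1}^I\theta_ia_{ij}>0$ and have dropped the additive constant $n_j\ln c_j$.

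Next I would observe that this univariate objective is strictly concave: $n_j b_j$ is linear, $-c_j e^{b_j}$ is strictly concave, and $-\lambda|b_j|$ is concave. A global maximizer is therefore characterized by the condition that $0$ lies in the subdifferential of the objective. Away from the kink the derivative is $n_j-c_je^{b_j}-\lambda\,\sign(b_j)$, while at $b_j=0$ the subdifferential of $-\lambda|b_j|$ is the interval $[-\lambda,\lambda]$, so that the subdifferential of the whole objective there is $[\,n_j-c_j-\lambda,\ n_j-c_j+\lambda\,]$.

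I would then split into the three cases. For $b_j>0$ the stationarity equation $n_j-c_je^{b_j}-\lambda=0$ gives $b_j=\ln\!\big((n_j-\lambda)/c_j\big)$, which is admissible precisely when $n_j-c_j>\lambda$; for $b_j<0$ the equation $n_j-c_je^{b_j}+\lambda=0$ gives $b_j=\ln\!\big((n_j+\lambda)/c_j\big)$, admissible when $n_j-c_j<-\lambda$; and $b_j=0$ is optimal exactly when $0\in[\,n_j-c_j-\lambda,\ n_j-c_j+\lambda\,]$, i.e.\ when $|n_j-c_j|\le\lambda$. Strict concavity guarantees that the stationary point so identified is the unique global maximizer, so these three mutually exclusive and exhaustive cases fully determine $b_j$.

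Finally I would check that the three branches assemble into the single closed form $(\ref{formula-b})$. Writing $x=n_j-c_j$, one verifies that $1+S_\lambda(x)/c_j$ equals $(n_j-\lambda)/c_j$ when $x>\lambda$, equals $(n_j+\lambda)/c_j$ when $x<-\lambda$, and equals $1$ when $|x|\le\lambda$; taking logarithms in each case reproduces the stated formula with the soft-thresholding operator. The only delicate point is the non-smooth term at $b_j=0$: handling it through the subdifferential rather than naive differentiation is exactly what produces the thresholding region $|n_j-c_j|\le\lambda$, and hence the operator $S_\lambda$, so this is the step that most repays careful treatment.
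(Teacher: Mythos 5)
Your proof is correct and follows essentially the same route as the paper's: exploit separability of the penalized objective across the $b_j$, and characterize the maximizer of each univariate problem via the subdifferential condition $0\in\partial f_j(b_j)$, which yields the soft-thresholding formula. The only difference is that you explicitly carry out the three-case verification ($b_j>0$, $b_j<0$, $b_j=0$) that the paper compresses into ``it can be verified that (\ref{formula-b}) is the solution,'' which is a welcome level of detail but not a new idea.
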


For more efficient convergence, we use an analytical
property of values of $\theta$ and $b$ which maximize $f(\theta,b)$:

\begin{proposition}\label{median}
There is at least one set of $\theta$ and $b$ which maximize $f(\theta,b)$ such that $\median(b_1,\ldots,b_J)=0$.
\end{proposition}

Accordingly, after each iteration which solves ($\ref{formula-b}$), our
approach includes a step which centers the $b$'s around their median by
updating as follows:
$b^\prime_j=b_j-\median(b_1,\ldots,b_J)$.

\subsection{Statistical Properties}

Several statistical properties of the two-step approach introduced in
Section~\ref{model} are provided in this section. First, we state an
intuitive interpretation of the procedure:

\begin{proposition}\label{statistical-properties}
	Fitting the model using the two-step approach introduced in Section~\ref{model} is equivalent to fitting the model after removing all the observation $n_j$'s whose corresponding $b_j$'s are non-zero. In other words, the two model-fitting steps essentially perform outlier detection and removal, respectively.
\end{proposition}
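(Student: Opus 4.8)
The plan is to argue by \emph{profiling out} the bias parameters that remain free in the second step. Write $S=\{j:\hat b_j\neq 0\}$ for the set of read categories whose bias parameters were selected (became nonzero) in the first, L1-penalized fit, so that in the second step the objective is the unpenalized log-likelihood~(\ref{new-log-likelihood}) in which $b_j$ is a free parameter for $j\in S$ and is fixed at $b_j=0$ (equivalently $e^{b_j}=1$) for $j\notin S$. Splitting the sum over $j$ accordingly, the second-step objective separates into a part indexed by $S$ and a part indexed by its complement, and the only coupling between the two parts is through the shared vector $\theta$. I would then use the elementary identity $\max_{\theta,\,\{b_j\}_{j\in S}}=\max_\theta\max_{\{b_j\}_{j\in S}}$ to optimize first over the free $b_j$'s with $\theta$ held fixed, and afterwards over $\theta$.

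For the inner maximization I would invoke Proposition~\ref{solution-b} with $\lambda=0$, since the second step carries no penalty. Because $S_0(x)=\sign(x)(|x|)_+=x$, formula~(\ref{formula-b}) collapses to $e^{b_j}=n_j/\sum_i\theta_i a_{ij}$ for each $j\in S$. Substituting this optimizer back makes the fitted mean of category $j$ exactly equal to its observed count, $\sum_i\theta_i a_{ij}e^{b_j}=n_j$, so the corresponding likelihood contribution reduces to $n_j\ln n_j-n_j$, which does not depend on $\theta$. Hence, after profiling, the second-step objective equals the reduced log-likelihood $\sum_{j\notin S}\{n_j\ln(\sum_i\theta_i a_{ij})-\sum_i\theta_i a_{ij}\}$ plus the constant $\sum_{j\in S}(n_j\ln n_j-n_j)$ that is independent of $\theta$. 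Dropping this additive constant leaves precisely the log-likelihood of the model fit to the data with every observation $n_j$, $j\in S$, deleted, so the two problems share the same maximizer in $\theta$; this is the asserted equivalence, and the interpretation as outlier detection (step one) followed by outlier removal (step two) then follows immediately.

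The main obstacle I anticipate is not the algebra but justifying the profiling rigorously at the joint optimum and handling the boundary cases. The clean identity $e^{b_j}=n_j/\sum_i\theta_i a_{ij}$ presumes an interior maximizer in $b_j$, so I would confirm that the first-order condition behind Proposition~\ref{solution-b} genuinely characterizes the optimum here, using concavity in $b$ for fixed $\theta$ from the biconcavity in Proposition~\ref{convexity}, and then treat the degenerate case $n_j=0$ for some $j\in S$, where the optimizer pushes $e^{b_j}\to 0$ to the boundary; there the contribution $n_j\ln n_j-n_j\to 0$ (with the convention $0\ln 0=0$), which is exactly consistent with discarding a zero-count category from the reduced sum. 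I would also stress that the claimed equivalence concerns the two optimization problems having the same set of $\theta$-maximizers rather than uniqueness, so no identifiability assumption beyond $|S|\le J-I$ is required.
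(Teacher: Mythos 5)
Your proposal is correct and follows essentially the same route as the paper's proof: split the objective over $S$ and its complement, profile out the free $b_j$'s via the first-order condition $e^{b_j}=n_j/\sum_i\theta_i a_{ij}$ (the paper solves $\partial f/\partial b_j=0$ directly, which is exactly your invocation of Proposition~\ref{solution-b} at $\lambda=0$), and observe that the substituted terms $n_j\ln n_j-n_j$ are constant in $\theta$, leaving the reduced log-likelihood. Your added care about concavity justifying the first-order condition and the boundary case $n_j=0$ goes slightly beyond what the paper writes down, but does not change the argument.
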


This observation can be extended to prove that in the case of $I=1$, the
two-step procedure yields a consistent estimate of $\theta$ under the
assumption that each $b_j\geq 0$. While $I=1$ may appear to be a trivial
case, in fact, this approach is equivalent to considering a subset of the
full model introduced in~\citet{Salzman2011} where each read is considered if and only if it
can be generated by exactly one isoform.  Reasonable statistical
power can be achieved with this approach, and it is of relatively wide use
by biologists.  

For convenience, the proposition and proof is stated for the case
where $a_{1j}=N$, but this assumption
can be relaxed to allow $a_{1j}$ to be arbitrary.  Also, from the proof, it is clear that
larger choices of $\lambda$ than $(\max_jn_j)^{1/2}$ also result in
consistent estimates, but perhaps unnecessarily sparse models.

\begin{proposition}
\label{consistent}
Under the assumptions that $I=1$, $a_{1j}=N$ for all $1\leq j \leq J$,
$\lambda=(\max_jn_j)^{1/2}$, the two-step approach yields a consistent estimator of $\theta$.
\end{proposition}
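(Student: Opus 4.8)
The plan is to make the two-step estimator explicit using Propositions~\ref{statistical-properties} and~\ref{solution-b}, and then show it converges to $\theta$ as the sequencing depth $N\to\infty$. Specializing to $I=1$ and $a_{1j}=N$, Proposition~\ref{solution-b} gives $\hat b_j=0$ exactly when $|n_j-\hat\theta N|\le\lambda$, so the set of retained (``non-outlier'') read types is $S=\{\,j:|n_j-\hat\theta^{(1)}N|\le\lambda\,\}$, where $\hat\theta^{(1)}$ is the first-step penalized estimate. By Proposition~\ref{statistical-properties} the second step discards the complementary types, and the unpenalized fit on $S$ has the closed form $\hat\theta=\bigl(\sum_{j\in S}n_j\bigr)/(N|S|)$, i.e.\ the sample mean of the retained counts divided by $N$. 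First I would also record the stationarity relation for the penalized step: summing the closed-form values $e^{\hat b_j}$ against the $\theta$-update and using that the surviving outliers are high-side (consistent with $b_j\ge 0$) yields $\hat\theta^{(1)}=\hat\theta+\lambda|S^{c}|/(N|S|)$, which quantifies the inflation caused by the L1 penalty.

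Next I would fix the asymptotic regime $N\to\infty$ with $\theta$ and the fixed $b_j\ge 0$ held constant, and collect orders of magnitude. Since $n_j\sim\mathrm{Poisson}(\theta N e^{b_j})$ we have $n_j/N\to\theta e^{b_j}$ in probability with fluctuations of order $N^{1/2}$; because every mean is at least $\theta N$, we get $\max_j n_j=\Theta(N)$ and hence $\lambda=(\max_j n_j)^{1/2}=\Theta(N^{1/2})$. The decisive separation is then that a truly biased type ($b_j>0$) has $n_j-\theta N=\theta N(e^{b_j}-1)+O_p(N^{1/2})=\Theta(N)$, dominating $\lambda=\Theta(N^{1/2})$, whereas a truly unbiased type ($b_j=0$) has $n_j-\theta N=O_p(N^{1/2})$, the same order as $\lambda$.

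The main obstacle is the circular dependence between the retained set $S$ and the first-step estimate $\hat\theta^{(1)}$: $S$ is defined through $\hat\theta^{(1)}$, while $\hat\theta^{(1)}$ depends on $S$. I would resolve this by a sandwiching argument. Any candidate $\hat\theta^{(1)}$ lies between $\min_j n_j/N$ and $\max_j n_j/N$, hence is $\Theta(1)$ and bounded below (in probability) by $\theta$; starting from a crude pilot such as $\bar n/N$ with $\bar n=J^{-1}\sum_j n_j$ (which converges to $\theta J^{-1}\sum_j e^{b_j}$), the order bounds above force every biased type into $S^{c}$ and confine $S$ to unbiased types with probability tending to $1$, and one checks this fixed point is stable. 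The stationarity relation then gives $\hat\theta^{(1)}-\hat\theta=\lambda|S^{c}|/(N|S|)=O_p(N^{1/2}/N)=O_p(N^{-1/2})\to 0$, so the two estimates share a limit and the window $|n_j-\hat\theta^{(1)}N|\le\lambda$ is, up to $o_p(N^{1/2})$ shifts, a symmetric window of half-width $\Theta(N^{1/2})$ about $\theta N$. This yields $S\subseteq\{\,j:b_j=0\,\}$ and $|S|\ge 1$ with probability tending to $1$, the step I expect to require the most care.

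Finally I would conclude. Conditioned on the high-probability event that $S$ consists of unbiased types, each retained count satisfies $n_j/N=\theta+O_p(N^{-1/2})$, so the average over $S$ obeys $\hat\theta=(N|S|)^{-1}\sum_{j\in S}n_j=\theta+O_p(N^{-1/2})$, regardless of which unbiased types the data-dependent threshold happens to keep; the possible selection bias enters only through the $O_p(N^{-1/2})$ term and therefore vanishes. Hence $\hat\theta\to\theta$ in probability. The same argument shows that any $\lambda$ with $N^{1/2}\ll\lambda\ll N$ retains \emph{all} unbiased types with probability tending to $1$, so that a direct law of large numbers yields consistency with an even cleaner, though sparser, model, matching the remark preceding the statement.
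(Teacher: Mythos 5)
Your proposal is correct and follows essentially the same route as the paper's appendix proof: both characterize the penalized fit as an EM/soft-threshold fixed point (so that the second-step estimate is the average $\sum_{j\in S}n_j/(N|S|)$ over the retained set), and both conclude from the scale separation --- $n_j-\theta N=\Theta(N)$ when $b_j>0$ versus $\lambda=\Theta(N^{1/2})$ when $b_j=0$ --- that with probability tending to one the retained set contains only unbiased types, whence the retained-set average is consistent. The only cosmetic differences are that the paper organizes the selection step through order statistics and a threshold index $t$ (showing $1\leq t\leq k$ a.s.) where you instead use the stationarity identity $\hat\theta^{(1)}=\hat\theta+\lambda|S^{c}|/(N|S|)$ with a sandwiching argument, and that the paper asserts almost-sure convergence while you settle for convergence in probability.
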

\section{Experiments}
\subsection{Simulations}

In this section, we use simulation to study our model in various gene
structures, relative isoform abundances, bias pattern throughout
the gene and sequencing depths. For each simulation replicate, we estimate
$\theta$ and $b$ using three approaches and compare their estimation
accuracies. Throughout the simulations, we choose
$\lambda=(\max_jn_j)^{1/2}$ because of the consistency results we obtain
with this choice.
\begin{enumerate}
\item The conventional approach~\citep{Jiang2009,Salzman2011} with no bias correction (i.e., fix $b=0$).
\item Our proposed one-step approach with bias correction (i.e., without doing the second step of estimation $\theta$ introduced in Section~\ref{model}).
\item Our proposed two-step approach with bias correction.
\end{enumerate}

\begin{example}\label{example1}
We first simulate the case that a gene has a single annotated isoform (i.e., $I=1$), $5$ read categories after collapsing (i.e., $J=5$, e.g., the gene has $5$ exons). Suppose the estimate sampling rate matrix $A=NC$, where $C=(1,1,1,1,1)$ are the relative sampling rates for the five exons (e.g., each exon has the same length of 1000 bp), and $N$ is the relative sequencing depth (e.g., $N=10$ in Table~\ref{simulation1} means there are 10M single end reads from the sequencing experiment.  We assume $\theta=1$ and $b=(2,0,0,0,0)^T$.
\end{example}

We simulate 100 replicates and report the average (and standard deviation)
of estimation error of $\theta$ in L2 distance in
Table~\ref{simulation1}. We also report the average (and standard
deviation) of the number of $b$'s that are misidentified as zero vs. non-zero.
 Table~\ref{simulation1} shows empirical results confirming our
theory: if some $b_j>0$, without bias correction, $\theta$ will not be
estimated consistently. While both one-step and two-step approaches
achieve consistent estimates of $\theta$, the two-step approach is more
efficient.  On average, we misidentify less than one
nonzero $b$'s.

\begin{table}[ht]
\centering
\caption{Estimation accuracy of Example~\ref{example1}. Average of 100 replicates, standard deviation reported in parentheses.
\label{simulation1}}
\begin{tabular}{rlllll}
  \hline
Seq Depth & No Bias & Bias (1-step) & Bias (2-step) & \#Misientified \\
  \hline
10 & 1.32 (0.2) & 0.24 (0.14) & 0.13 (0.1) & 0.03 (0.17) \\
  100 & 1.26 (0.06) & 0.07 (0.04) & 0.04 (0.04) & 0.09 (0.29) \\
  1000 & 1.28 (0.02) & 0.02 (0.01) & 0.01 (0.01) & 0.05 (0.22) \\
   \hline
\end{tabular}
\end{table}

\begin{example}\label{example2}
We now consider a case with $I=2$, $J=6$ and $C=(1,2,1,2,3,2;1,2,0,2,3,2)$, e.g., a gene with six exons and two isoforms differ by the inclusion/exclusion of the third exon. We assume $\theta=(6,3)^T$ and $b=(-5,0,0)^T$.
\end{example}

The simulation results for Example~\ref{example2} are shown in Table~\ref{simulation2}. The performance of the three approaches is similar to that in Example~\ref{example1}.

\begin{table}[ht]
\centering
\caption{Estimation accuracy of Example~\ref{example2}. Average of 100 replicates, standard deviation reported in parentheses.
\label{simulation2}}
\begin{tabular}{rlllll}
  \hline
Seq Depth & No Bias & Bias (1-step) & Bias (2-step) & \#Misientified \\
  \hline
10 & 3.78 (0.23) & 3.41 (1.85) & 3.02 (1.71) & 0.13 (0.34) \\
  100 & 3.76 (0.08) & 1.82 (1.28) & 1.4 (0.9) & 0.01 (0.1) \\
  1000 & 3.77 (0.03) & 0.45 (0.32) & 0.36 (0.26) & 0 (0) \\
   \hline
\end{tabular}
\end{table}

\begin{example}\label{example3}
We now consider a case with $I=5$, $J=20$. For each replicate of the simulation, we randomly sample each element of $C$ as $c_{ij}=I_{u_1<0.1}0+I_{u_1>=0.1}\text{Uniform}(0,1)$, where $u_1\sim\text{Uniform}(0,1)$. We also randomly sample each element of $\theta$ and $b$ as $\theta_i\sim\text{Exponential}(1)$ and $b_i=I_{u_2<0.9}0+I_{u_2>=0.9}\text{N}(0,3)$ where  $u_2\sim\text{Uniform(0,1)}$.
\end{example}

The simulation results for Example~\ref{example3} are shown in
Table~\ref{simulation3}. The performance of the three approaches is
similar to that in Examples~\ref{example1} and~\ref{example2}. In particular, the approach without bias correction introduces huge estimation error in some of the cases (e.g., when $b_j$ is large and $a_{ij}$ is small).

\begin{table}[ht]
\centering
\caption{Estimation accuracy of Example~\ref{example3}. Average of 100 replicates, standard deviation reported in parentheses.
\label{simulation3}}
\begin{tabular}{rlllll}
  \hline
Seq Depth & No Bias & Bias (1-step) & Bias (2-step) & \#Misientified \\
  \hline
10 & 76.8 (471.72) & 1.22 (1.42) & 0.93 (0.69) & 2.17 (1.53) \\
  100 & 7792.35 (75161.75) & 2.56 (17.7) & 0.41 (0.41) & 2.2 (1.57) \\
  1000 & 406.35 (1934.52) & 0.42 (1) & 0.18 (0.41) & 2 (1.68) \\
   \hline
\end{tabular}
\end{table}

\subsection{Real data analysis}

We evaluted our model using real RNA-Seq data from the Gm12878 cell line
generated by the
ENCODE project~\citep{ENCODE}. A total of $415,630$ single-end reads of 75
bp mapped to human chromosome 22 are used in the analysis. We use the
human RefSeq annotation database~\citep{Pruitt2009} for our analysis. We
ran both the conventional approach (without bias correction) and our
proposed one-step approach (with bias correction) on this data set. 579
genes have estimated expression level $\geq1$ using the unit
RPKM~\citep{Mortazavi2008}, and 65 of the 579 genes have at least 2-fold change in their gene expression estimates between the approaches with and without bias correction.

MED15 is an example of a gene with greater than 2-fold change in the total
and expression of two isoforms of the gene
with and without bias correction, shown in Figure~\ref{example}.  The
center part of the gene has a much greater read density than the 5' or 3'
ends. Without bias correction, MED15's expression is estimated as 1487.11
RPKM (with the two isoforms estimated as 54.89 RPKM and 1432.22 RPKM,
respectively). The one-step approach identifies this bias and
down-weights the contribution of reads from center part of the gene to
total gene expression.  Consequently, it estimates the
gene expression as 702.52 RPKM (with the two isoforms estimated as 53.65 RPKM
and 648.87 RPKM, respectively).

The observed bias in MED15 could be due to mapping artifacts, or
preferential amplification of portions of the gene during RNA-Seq library
preparation.  Further investigation, including experimental testing may be
required to determine if either of these explanations for increased read
density are explanatory.  Another explanation could be that the gene
model used for our model, which includes just two isoforms, is incomplete.  For example, the observed increased
read density could be due to expression of other isoforms of MED15
that include these regions.

\begin{figure}
\begin{center}
\includegraphics[scale=0.75]{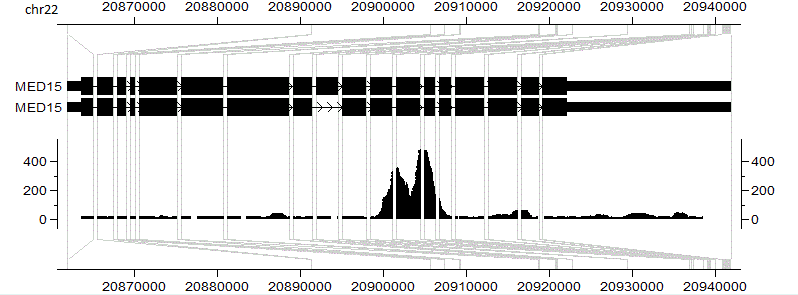}
\end{center}
\caption{Visualization of RNA-Seq reads mapped to the gene MED15 on human chromosome 22 in the CisGenome Browser~\citep{Jiang2010}. From top to bottom: genomic coordinates, gene structure where exons are magnified for better visualization, coverage of mapped reads. Reads are 75 bp single-end.}\label{example}
\end{figure}

\section{Discussion}

In this paper we choose $\lambda=(\max_jn_j)^{1/2}$, which seems to work
reasonably well with both simulated and real data and which we have shown
to produce consistent estimates of $\theta$ under reasonable
assumptions. We believe that more research on statistical
properites of different choices of $\lambda$ may lead to
improvement of our model in applied settings.  For example, we plan to
evaluate a standard approach of choosing $\lambda$ by
cross-validation, although it comes at the cost of more intensive
computation.  Also, as our proof of consistency shows, choosing values of
$\lambda$ larger than $(\max_jn_j)^{1/2}$ will also yield consistent
estimators of $\theta$ under the regime analyzed in Proposition $\ref{consistent}$.

 From~(\ref{formula-b}), it is apparent that the larger the value of $n_j$, the
 relatively smaller portion of it is affected by $\lambda$. Intuitively,
 the proposed approach works the best when the read categories are of
 similar sizes. In our real data experiment, we collapsed reads into exons
 and junctions to roughly fulfill this condition, and simulation
 demonstrates that our
 proposed approach is not very sensitive to how collapsing is performed. An alternative approach is to use $n_j$ as
 the weight for the corresponding $b_j$, i.e., by letting
 $p(b)=\lambda\sum_{j=1}^Jn_j|b_j|$. All the statistical properties and
 optimization techniques introduced in the paper can be adapted to this
 new penalty function with only minor modifications. In
 simulations (results not shown here), this new penalty function does not perform
 noticeably better than the current penalty function.  There may be
 advantages and disadvantages to increasing penalties for biases
 corresponding to larger $n_j$.

Although the two-step approach appears to be slightly more efficient than
the one-step approach in our simulations, it has several critical
drawbacks: 1) It requires an increase in computation up to a factor of
two; 2) It may introduce non-identifiable issue in the second step of
estimation when the number of nonzero $b$'s identified in the first step
of estimation is large; and 3) It makes parameter estimates sensitive to $\lambda$. Therefore, we use the one-step approach in our real data experiment and we plan to study the two-step approach in more details in future work.

The example of MED15 highlights another use of fitting bias
parameters. First, in the presence of unannotated isoforms of a gene,
correcting for bias in read sampling may be correcting for real biological
confounding.  In such scenarios, simulation suggests that correcting for
bias improves model fit and quantification conditional on the gene models used
for the study.  For example, the two transcripts of MED15 in Figure 1 are
probably more realistically estimated by our bias-corrected model.
In addition, screening genes with large estimated bias parameters may be
a tool for identifying unannotated transcripts or incomplete models used
in the mapping step.

Finally, the approach introduced in this paper is adapted and stated for the isoform
expression estimation problem, which is formally a Poisson regression
model with identity link function.  We believe it may be possible to generalize it to other generalized linear models such as linear regression models and logistic regression models, for which other practical applications may exist as well.

\section*{Appendix}

\begin{proof}[Proof of Proposition~\ref{convexity}]
\begin{eqnarray*}
f(\theta,b)&=&\displaystyle{\sum_{j=1}^J\left\{n_j\ln\left(\sum_{i=1}^I\theta_ia_{ij}e^{b_j}\right)
-\sum_{i=1}^I\theta_ia_{ij}e^{b_j}\right\}-\lambda\sum_{j=1}^J|b_j|}\\
&=&\displaystyle{\sum_{j=1}^J\left\{n_j\ln\left(\sum_{i=1}^I\theta_ia_{ij}\right)\right\}+\sum_{j=1}^Jn_jb_j
-\sum_{j=1}^J\sum_{i=1}^I\theta_ia_{ij}e^{b_j}-\sum_{j=1}^J\lambda|b_j|}
\end{eqnarray*}
where $n_jb_j$ and $-\lambda|b_j|$ are concave, $n_j\ln\left(\sum_{i=1}^I\theta_ia_{ij}\right)$ is concave because $-\sum_{i=1}^I\theta_ia_{ij}$ is concave and $\ln()$ is concave and non-decreasing, and $-\theta_ia_{ij}e^{b_j}$ is biconcave because both $\theta_i$ and $e^{b_j}$ are convex.
\end{proof}

\begin{proof}[Proof of Proposition~\ref{solution-b}]
Fixing $\theta$, since the L1-penalty is decomposable, $f(b)$ can be written as the sum of $J$ terms $f(b)=\sum_{j=1}^Jf_j(\theta,b_j)$, where
\begin{eqnarray*}
f_j(b_j)&=&\displaystyle{n_j\ln\left(\sum_{i=1}^I\theta_ia_{ij}e^{b_j}\right)
-\sum_{i=1}^I\theta_ia_{ij}e^{b_j}-\lambda|b_j|}\\
&=&n_jb_j+\displaystyle{n_j\ln\left(\sum_{i=1}^I\theta_ia_{ij}\right)
-e^{b_j}\sum_{i=1}^I\theta_ia_{ij}-\lambda|b_j|}\\
\end{eqnarray*}
Therefore, $b_j=\argmax_{b_j}f_j(\theta,b_j)$. Note the second term of $f_j(b_j)$ does not contain $b_j$. Since $|\cdot|$ is non-differentiable, we take the subdifferential of $f_j$ at $b_j$
$$\partial f_j(b_j)=n_j-e^{b_j}\sum_{i=1}^I\theta_ia_{ij}-\lambda s_j$$
where $s_j=\sign(b_j)$ if $b_j\neq0$ and $s_j\in[-1,1]$ if $b_j=0$. It can be verified that (\ref{formula-b}) is the solution to the equation $\partial f_j(b_j)=0$.
\end{proof}

\begin{proof}[Proof of Proposition~\ref{median}]
Suppose $\theta$ and $b$ are such that $(\theta,b)=\argmax_{(\theta,b)}f(\theta,b)$. Let $b_j^\prime=b_j-m$ and $\theta_i^\prime=\theta_ie^m$, where $m=\median(b_1,\ldots,b_J)$, then
\begin{eqnarray*}
f(\theta^\prime,b^\prime)&=&\displaystyle{\sum_{j=1}^J\left\{n_j\ln\left(\sum_{i=1}^I\theta_i^\prime a_{ij}e^{b_j^\prime}\right)
-\sum_{i=1}^I\theta_i^\prime a_{ij}e^{b_j^\prime}\right\}-\lambda\sum_{j=1}^J|b_j^\prime|}\\
&=&\displaystyle{\sum_{j=1}^J\left\{n_j\ln\left(\sum_{i=1}^I\theta_ie^ma_{ij}e^{b_j-m}\right)
-\sum_{i=1}^I\theta_ie^ma_{ij}e^{b_j-m}\right\}-\lambda\sum_{j=1}^J|b_j-m|}\\
&=&\displaystyle{\sum_{j=1}^J\left\{n_j\ln\left(\sum_{i=1}^I\theta_i a_{ij}e^{b_j}\right)
-\sum_{i=1}^I\theta_ia_{ij}e^{b_j}\right\}-\lambda\sum_{j=1}^J|b_j-m|}\\
&\geq&\displaystyle{\sum_{j=1}^J\left\{n_j\ln\left(\sum_{i=1}^I\theta_i a_{ij}e^{b_j}\right)
-\sum_{i=1}^I\theta_ia_{ij}e^{b_j}\right\}-\lambda\sum_{j=1}^J|b_j|}\\
&=&f(\theta,b)
\end{eqnarray*}
\end{proof}

\begin{proof}[Proof of Proposition~\ref{statistical-properties}]
Without log of generality, suppose $b_j\neq0, (j=1,\ldots,k)$ and $b_j=0, (j=k+1,\ldots,J)$ after the first step of model fitting with the L1 penalty. In the second step of model fitting without the L1 penalty, we have $(\theta,b)=\argmax_{(\theta,b)}f(\theta,b)$, where
\begin{equation}\label{f-new}
f(\theta,b)=\displaystyle{\sum_{j=1}^k\left\{n_j\ln\left(\sum_{i=1}^I\theta_ia_{ij}e^{b_j}\right)
-\sum_{i=1}^I\theta_ia_{ij}e^{b_j}\right\} +\sum_{j=k+1}^J\left\{n_j\ln\left(\sum_{i=1}^I\theta_ia_{ij}\right)
-\sum_{i=1}^I\theta_ia_{ij}\right\}}.
\end{equation}
Solving
$$\frac{\partial f(\theta,b)}{\partial b_j}=n_j-\sum_{i=1}^I\theta_ia_{ij}e^{b_j}=0$$
we have
\begin{equation}\label{solution-b-new}
b_j=\log\left(\frac{n_j}{\sum_{i=1}^I\theta_ia_{ij}}\right).
\end{equation}
Plugging~(\ref{solution-b-new}) into~(\ref{f-new}), we have
$$f(\theta,b)=\sum_{j=1}^k(n_j\ln n_j-n_j)+\sum_{j=k+1}^J\left\{n_j\ln\left(\sum_{i=1}^I\theta_ia_{ij}\right)
-\sum_{i=1}^I\theta_ia_{ij}\right\}$$
therefore $\theta=\argmax_\theta f^\prime(\theta)$ where
$$f^\prime(\theta)=\sum_{j=k+1}^J\left\{n_j\ln\left(\sum_{i=1}^I\theta_ia_{ij}\right)
-\sum_{i=1}^I\theta_ia_{ij}\right\}$$
is exactly the log-likelihood after removing all the observation $n_j$'s whose corresponding $b_j$'s are non-zero.
\end{proof}

\subsection*{Proof of consistency}

In this section, we assume $a_{1j}=N$ for convenience and prove
Proposition $\ref{consistent}$, that is, the two-step estimation procedure
is consistent, in several steps.
\begin{prop}
\label{max}
For $I=1$, any solution that maximizes $(2.2) $ under the constraint that $b_j \geq 0$ for $1 \leq j
\leq J$ must be a fixed point of the E-M algorithm and therefore, plugging
in to Equation~$(\ref{formula-b})$ implies that if $\hat{b}_j>0$,

\beq
\hat{\theta}e^{\hat{b}_j} = \frac{n_j - \lambda}N.
\eeq
Therefore, on the event for some $A$ a subset of $\{1,\ldots,
J\}$ where A is defined by $\hat{b_j} =0$ if and only if $j \in A$, the penalized log-likelihood reduces to:

\begin{equation}\label{f-subset}
f(\theta,b)=\displaystyle{ \sum_{j\in A}
  \left\{n_j\ln\left(\theta N
 \right) -\theta N\right\} -\sum_{j \notin A} \left\{(n_j - \lambda) - n_j\ln(n_j - \lambda ) \right\}  - \lambda \sum_{j
    \notin A} b_j}
\end{equation}

\end{prop}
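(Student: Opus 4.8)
The plan is to exploit that the objective~(\ref{penalized-log-likelihood}) is biconcave (Proposition~\ref{convexity}), so that any global maximizer is automatically a coordinate-wise maximizer: it maximizes $f$ in $\theta$ with $b$ held fixed, and in $b$ with $\theta$ held fixed. Under the specialization $I=1$ and $a_{1j}=N$, the constraint $b_j\ge 0$ turns the penalty into $\lambda\sum_j b_j$, so the objective reads $f(\theta,b)=\sum_{j=1}^J\{n_j\ln(\theta N e^{b_j})-\theta N e^{b_j}\}-\lambda\sum_{j=1}^J b_j$. I would first establish the fixed-point claim from the $\theta$-block. Differentiating in $\theta$ gives the stationarity condition $\sum_j n_j/\theta = N\sum_j e^{b_j}$, i.e. $\theta=(\sum_j n_j)/(N\sum_j e^{b_j})$; specializing the E-step of Algorithm~\ref{EM} to $I=1$ gives $\hat n_{1j}=n_j$, so this is precisely the M-step update, confirming that a maximizer is a fixed point of the E-M iteration.

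Next I would characterize the $b$-block. Because the penalty is separable, I optimize each $b_j$ independently over $b_j\ge 0$. For an index with $\hat b_j>0$ the constraint is inactive, so the ordinary first-order condition $\partial f_j/\partial b_j = n_j-\theta N e^{b_j}-\lambda=0$ holds, which rearranges to $\hat\theta e^{\hat b_j}=(n_j-\lambda)/N$, equivalently $\hat\theta N e^{\hat b_j}=n_j-\lambda$; this is the displayed identity in the statement. I would also check consistency with the closed form~(\ref{formula-b}): for $I=1$, $a_{1j}=N$, and $n_j-\theta N>\lambda$ the soft-threshold equals $S_\lambda(n_j-\theta N)=n_j-\theta N-\lambda$, and substitution into~(\ref{formula-b}) yields $b_j=\ln((n_j-\lambda)/(\theta N))$, the same relation. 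Indices with $n_j-\theta N\le\lambda$ are clamped to $b_j=0$ and define the set $A$.

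Finally, I would substitute the block optima back into $f$. For $j\in A$ we have $b_j=0$, leaving the term $n_j\ln(\theta N)-\theta N$ with no penalty contribution. For $j\notin A$, I plug $\theta N e^{b_j}=n_j-\lambda$ into $n_j\ln(\theta N e^{b_j})-\theta N e^{b_j}$ to obtain $n_j\ln(n_j-\lambda)-(n_j-\lambda)$, while the penalty retains $-\lambda b_j$; collecting these over $j\notin A$ and writing the profiled terms as $-\{(n_j-\lambda)-n_j\ln(n_j-\lambda)\}$ reproduces~(\ref{f-subset}) exactly.

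The step I expect to be the main obstacle is making the reduction from ``global maximizer'' to ``coordinate-wise stationary point satisfying the closed forms'' fully rigorous under the inequality constraint $b_j\ge 0$. Biconcavity guarantees only that a maximizer is stationary in each block, not that the blockwise conditions are jointly sufficient, so I must argue necessity (via the KKT conditions) rather than sufficiency. Concretely, for each $j$ I should verify that the boundary case $\hat b_j=0$ occurs exactly when the unconstrained optimum would be negative, i.e. when $n_j-\theta N\le\lambda$, so that the partition into $A$ and its complement is well defined and the substitution above is valid on the event in question.
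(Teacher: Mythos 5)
Your proof is correct and follows essentially the same route as the paper's (very terse) argument: a global maximizer must be a blockwise/EM fixed point, the stationarity condition in $b_j$ on the inactive-constraint set gives $\hat{\theta}e^{\hat{b}_j}=(n_j-\lambda)/N$, and substituting the profiled $b_j$ back into the objective yields~(\ref{f-subset}). Your explicit KKT treatment of the constraint $b_j\geq 0$ (clamping $b_j=0$ exactly when $n_j-\theta N\leq\lambda$) merely fills in details that the paper's two-line proof sketch leaves implicit.
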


\begin{proof}[Proof of $\ref{max}$]
The proof is by contradiction: if the solution does not maximize these
equations, iterating 1 step of the E or M will increase the likelihood.
Plugging in the solution for $b_j$ in terms of $\theta$ as in the proof of $(2.5)$.
\end{proof}

\begin{prop}[Corollary of Proposition $\ref{median}$]
\label{one}Under the constraint $b_j \geq 0$ for all $1 \leq j \leq J$,
the solution maximizing the likelihood ($\ref{penalized-log-likelihood}$) must estimate at least one $b_j=0$.
\end{prop}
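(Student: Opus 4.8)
The plan is to reuse the reparametrization invariance established in the proof of Proposition~\ref{median}, but now to exploit the one-sided constraint $b_j\geq0$. Recall from that proof that the map $\theta_i\mapsto\theta_ie^m$, $b_j\mapsto b_j-m$ leaves the log-likelihood part of $f$ exactly unchanged for every real $m$, because $\theta_ie^m a_{ij}e^{b_j-m}=\theta_ia_{ij}e^{b_j}$. Consequently, under any such shift only the penalty term $-\lambda\sum_j|b_j|$ can change, so the entire argument reduces to tracking how this penalty responds to translating all of the $b_j$'s by a common amount.

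First I would take an arbitrary feasible maximizer $(\theta,b)$ of~(\ref{penalized-log-likelihood}) subject to $b_j\geq0$ for all $j$, and set $m=\min_{1\leq j\leq J}b_j$, which is nonnegative by feasibility. The shifted candidate $\theta'_i=\theta_ie^m$, $b'_j=b_j-m$ is again feasible, since $b'_j=b_j-m\geq0$ by the choice of $m$, with equality for at least one index. Because $0\leq m\leq b_j$ for every $j$, each absolute value opens with the same sign, so $\sum_j|b'_j|=\sum_j(b_j-m)=\sum_j b_j-Jm$. Combining this with the invariance of the log-likelihood yields
\begin{equation*}
f(\theta',b')=f(\theta,b)+\lambda Jm.
\end{equation*}

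Finally I would conclude by contradiction: if $m>0$, then since $\lambda>0$ and $J\geq1$ we would have $f(\theta',b')>f(\theta,b)$, contradicting the maximality of $(\theta,b)$ over the feasible region. Hence $m=0$, which means $\min_j b_j=0$ and so at least one $b_j$ equals zero. The only point that requires care---and the closest thing to an obstacle---is confirming that the downward shift stays inside the feasible set and that the absolute values all collapse linearly; both are secured precisely by shifting down to the minimum rather than to the median, which is the single change from Proposition~\ref{median} forced by the constraint $b_j\geq0$.
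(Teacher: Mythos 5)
Your proof is correct. The paper gives no standalone argument for this statement---it simply labels the result a corollary of Proposition~\ref{median}, whose proof is the same common-shift reparametrization $\theta_i\mapsto\theta_ie^m$, $b_j\mapsto b_j-m$ that you use, but with $m$ taken to be the median of the $b_j$'s. Your single change, shifting by $m=\min_jb_j$ instead, is exactly the right adaptation to the constrained problem, and it actually repairs a gap in the literal corollary-of-the-median reading: if a constrained maximizer had all $b_j>0$, its median would be positive, and shifting down by the median would drive the sub-median coordinates negative, outside the feasible set $\{b_j\geq0\}$, so Proposition~\ref{median} cannot be invoked directly for the constrained problem. Shifting by the minimum keeps the shifted candidate feasible, makes every absolute value open linearly so that $\sum_j|b_j-m|=\sum_jb_j-Jm$, and yields the strict improvement $f(\theta',b')=f(\theta,b)+\lambda Jm$ whenever $m>0$. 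As a bonus, your strict inequality delivers a slightly stronger conclusion than the weak inequality in the proof of Proposition~\ref{median} would: \emph{every} constrained maximizer, not merely some maximizer, satisfies $\min_jb_j=0$.
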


Suppose $n_j$ is  $Po(N\mu_j)$ where $N$ is the ``sequencing depth'' and $\mu_j$
does not depend on $N$, and $\mu_j = e^{b_j} \theta$ and $$\hat{\mu}_j=\frac{n_j}{N}.$$  Limit results are stated for the case where $N \rightarrow \infty$.
In model ($\ref{penalized-log-likelihood}$), assume the $b_j$ are non-decreasing in $j$. Note
that $j\geq 1$ exists by Prop. $\ref{one}$.

\begin{prop}
Consider the model in ($\ref{penalized-log-likelihood}$) under a further assumption that for some
fixed $k$ where
$1\leq k \leq J$, $b_j=0$ for all $ 1 \leq j\leq k$ and $b_j >0$ for all $k<j\leq
J$.  The maximum likelihood estimator takes the form
$$ \hat{\theta}= \frac{\sum_{j \in A} n_j}{|A|N}$$ where $A$ indexes the subset
of $\{b_j\}_{1 \leq j \leq J}$ estimated to be zero, and $b_j$ takes the
form $(\ref{formula-b})$ for all $1 \leq j \leq J$.
\end{prop}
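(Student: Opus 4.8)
The plan is to read this proposition as the explicit closed-form description of the \emph{two-step} estimator under the stated ordering, and to prove it by chaining the results already in hand rather than by any fresh optimization. First I would invoke Proposition~\ref{statistical-properties}: once the first (penalized) fit has split the read categories into the estimated zero set $A=\{j:\hat b_j=0\}$ and its complement, the second fit---performed without the L1 penalty while retaining the nonzero $b_j$ as free parameters---is equivalent to maximizing the Poisson log-likelihood built only from the observations indexed by $A$. Specializing to $I=1$ and $a_{1j}=N$, this reduced objective is $g(\theta)=\sum_{j\in A}\{n_j\ln(\theta N)-\theta N\}$, because optimizing out each free $b_j$ with $j\notin A$ via $\theta N e^{b_j}=n_j$ contributes only the $\theta$-independent constant $n_j\ln n_j-n_j$, exactly as in the computation behind $(\ref{f-new})$.

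Next I would solve the scalar problem for $g$. Since $g$ is a sum of Poisson log-likelihood terms in the single positive parameter $\theta$, it is strictly concave on $\theta>0$, so its unique maximizer is pinned down by the first-order condition $\sum_{j\in A}(n_j/\theta-N)=0$. Solving gives $\hat\theta=\big(\sum_{j\in A}n_j\big)/(|A|N)$, the sample mean of the retained $\hat\mu_j=n_j/N$; concavity certifies that this stationary point is the global maximum, so no boundary case analysis is required. Proposition~\ref{one} guarantees $A\neq\emptyset$, hence the denominator $|A|N$ never vanishes, and the ordering hypothesis ($b_j=0$ for $j\le k$ and $b_j>0$ for $j>k$) is merely a relabeling that identifies $A=\{1,\dots,k\}$ with $|A|=k$.

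For the claimed form of the $b_j$ I would appeal to the first step. With $\theta$ held at its penalized optimum, Proposition~\ref{solution-b} already gives the closed form $(\ref{formula-b})$ for every $b_j$, and Proposition~\ref{max} shows that any maximizer of $(\ref{penalized-log-likelihood})$ under $b_j\ge 0$ is an E-M fixed point; consequently, for $j\notin A$ the soft-thresholded numerator is strictly positive and $\hat\theta e^{\hat b_j}=(n_j-\lambda)/N$, while for $j\in A$ the operator $S_\lambda$ drives $b_j$ to $0$. This is precisely the uniform statement ``$b_j$ takes the form $(\ref{formula-b})$ for all $1\le j\le J$.''

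The part needing the most care---and the main place a direct calculation goes wrong---is keeping the two steps cleanly separated. The clean average $\hat\theta=\big(\sum_{j\in A}n_j\big)/(|A|N)$ is the \emph{unpenalized} refit, whereas differentiating the penalized objective $(\ref{penalized-log-likelihood})$ directly (equivalently, $(\ref{f-subset})$) leaves behind an extra shrinkage term $\lambda(J-|A|)/(|A|N)$, so that $\lambda$ does not disappear. I would therefore make explicit that the $n_j\ln n_j$ cancellation, and hence the absence of $\lambda$ from $\hat\theta$, rests on dropping the penalty in the second step per Proposition~\ref{statistical-properties}, while the $\lambda$ appearing in $(\ref{formula-b})$ survives only in the first-step determination of which $b_j$ are nonzero. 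Flagging this distinction is what makes the two closed forms in the statement mutually consistent.
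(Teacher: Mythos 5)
Your proof is correct, and it is worth noting that the paper itself states this proposition \emph{without} any proof, leaving the argument implicit in the surrounding appendix results. Your route---invoking Proposition~\ref{statistical-properties} to reduce the second-step fit to the Poisson likelihood on the retained set $A$, solving the strictly concave scalar problem $\sum_{j\in A}\{n_j\ln(\theta N)-\theta N\}$ to obtain $\hat\theta=\bigl(\sum_{j\in A}n_j\bigr)/(|A|N)$, using Proposition~\ref{one} to guarantee $A\neq\emptyset$, and attributing the soft-thresholded form~(\ref{formula-b}) of the $b_j$ to the first-step fixed point via Propositions~\ref{solution-b} and~\ref{max}---is precisely the chaining that the paper's appendix presupposes, so you have supplied the proof the paper omits. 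Your final paragraph adds genuine value beyond what the paper makes explicit: if one instead maximizes the penalized objective~(\ref{f-subset}) directly over $\theta$, substituting $b_j=\ln\bigl((n_j-\lambda)/(\theta N)\bigr)$ for $j\notin A$, the first-order condition gives $\hat\theta=\bigl(\sum_{j\in A}n_j+\lambda(J-|A|)\bigr)/(|A|N)$, so the clean average in the statement can only be the unpenalized second-step refit. Making this explicit resolves a real ambiguity in the paper's phrase ``maximum likelihood estimator'' and squares with the Section~2.1 remark that the L1 penalty inflates $\theta$; one terminological quibble is that your extra term $\lambda(J-|A|)/(|A|N)$ \emph{inflates} rather than shrinks $\hat\theta$ (it is the shrinkage of $b$ that causes the inflation), so ``shrinkage term'' is a slight misnomer.

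One aside in your write-up is inaccurate and should be deleted: the ordering hypothesis on the \emph{true} $b_j$ does not ``identify $A=\{1,\ldots,k\}$.'' The set $A$ is the \emph{estimated} zero set, and its relation to $\{1,\ldots,k\}$ in finite samples is exactly what Proposition~\ref{consistency} and the final proposition of the appendix establish asymptotically (namely $1(A\subset\{1,\ldots,k\})\rightarrow 1$ a.s.). Since your derivation of $\hat\theta$ holds for an arbitrary nonempty estimated $A$, this slip does not damage the proof, but as stated it conflates the hypothesis on the truth with a property of the estimator.
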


\begin{prop}
\label{consistency}
If $1(A \subset \{1, \ldots, k\}) \rightarrow 1$ a.s., where $1(\cdot)$ is the indicator function, on this event, $\hat{\theta}
\rightarrow \theta$ a.s.
\end{prop}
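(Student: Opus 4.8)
The plan is to reduce the claim to a strong law of large numbers for finitely many Poisson counts, the one subtlety being that the index set $A$ is random and varies with $N$. First I would record the explicit form of the estimator established in the preceding proposition: on the event in question the maximum likelihood estimator is the average
\begin{equation*}
\hat\theta=\frac{\sum_{j\in A}n_j}{|A|N}=\frac{1}{|A|}\sum_{j\in A}\hat\mu_j ,
\end{equation*}
and by Proposition~\ref{one} the set $A$ is nonempty, so this is a genuine average of the ratios $\hat\mu_j=n_j/N$ over $j\in A$.

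Next I would exploit the hypothesis $A\subset\{1,\ldots,k\}$. For every $j$ with $1\le j\le k$ the true bias satisfies $b_j=0$, hence $\mu_j=e^{b_j}\theta=\theta$, so each count $n_j$ that can possibly enter the average is $\mathrm{Poisson}(N\theta)$. Realizing the counts on a common probability space through unit-rate Poisson processes, $n_j=M_j(N\mu_j)$, the strong law for Poisson processes gives $\hat\mu_j=n_j/N\to\mu_j=\theta$ a.s.\ as $N\to\infty$ for each fixed $j\le k$. This embedding is what makes almost-sure convergence in the continuous depth parameter $N$ meaningful.

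The key device for handling the $N$-dependence of $A$ is to pass to a uniform bound over the fixed finite family of subsets of $\{1,\ldots,k\}$. Since $\{1,\ldots,k\}$ does not depend on $N$, the quantity
\begin{equation*}
\Delta_N:=\max_{1\le j\le k}|\hat\mu_j-\theta|
\end{equation*}
is a finite maximum of terms each tending to $0$ a.s., so $\Delta_N\to0$ a.s. For any nonempty $A\subset\{1,\ldots,k\}$ the triangle inequality yields
\begin{equation*}
|\hat\theta-\theta|=\Bigl|\tfrac{1}{|A|}\sum_{j\in A}(\hat\mu_j-\theta)\Bigr|\le\frac{1}{|A|}\sum_{j\in A}|\hat\mu_j-\theta|\le\Delta_N ,
\end{equation*}
and this bound holds regardless of which subset is selected. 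Finally I would invoke the assumption $1(A\subset\{1,\ldots,k\})\to1$ a.s., which (the indicator being $\{0,1\}$-valued) says that for almost every outcome the event $A\subset\{1,\ldots,k\}$ holds for all sufficiently large $N$; on such outcomes $|\hat\theta-\theta|\le\Delta_N\to0$, giving $\hat\theta\to\theta$ a.s.

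The main obstacle is precisely the randomness of $A$: one cannot apply the law of large numbers to a fixed sequence of averages, because the set being averaged over is data-dependent and changes with $N$. The resolution is the uniform bound above, which decouples the convergence from the selection of $A$ by using that the true mean equals $\theta$ on all of $\{1,\ldots,k\}$; once $A$ is trapped inside this set, the particular choice of indices becomes irrelevant.
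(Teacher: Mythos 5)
Your proof is correct and takes essentially the same route as the paper, whose own proof merely asserts that consistency is ``clear'' from the reduced likelihood and the estimator form $\hat{\theta}=\sum_{j\in A}n_j/(|A|N)$ --- precisely the average-of-$\hat{\mu}_j$ argument you carry out. The rigor you add (the unit-rate Poisson-process coupling to make almost-sure convergence in $N$ meaningful, and the uniform bound $\Delta_N=\max_{1\le j\le k}|\hat{\mu}_j-\theta|$ to decouple the limit from the random, $N$-dependent set $A$) is a detail-level completion of what the paper leaves implicit, not a different approach.
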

\begin{proof}[Proof of Prop. $\ref{consistency}$]
The first part is proved below. From the above equation $\ref{f-subset}$, it is clear that $1(A \subset \{1, \ldots, k\}) \rightarrow 1$ a.s. implies $\hat{\theta}
\rightarrow \theta$ a.s.
\end{proof}

Define the following procedure with tuning parameter $\lambda$ and the order statistics
of observed counts $n_{(1)},n_{(2)}, \ldots , n_{(J)}$: Intuitively,
Equation $\ref{f-subset}$ together with the closed form solution for $b_j$
in terms of an estimate of $\theta$ show that a fixed point of the EM
algorithm will define the set of variables with non-zero $b_j$ as those
where, omitting them,  the estimate of $N\theta$ plus $\lambda$ is less
than any observed $n_j$ ommitted.

To see this formally,
for each $1 \leq j \leq J$, let $\hat{\theta}_j = \frac{1}{jN}\sum_{i=1}^j n_{(i)}$.
Define $t$ as the smallest index $1\leq t\leq J-1$ such that $n_{(t+1)}> N\hat{\theta}_t +\lambda$. 
If no such $t$ exists, define $t=0$.

\begin{prop}
Suppose for some fixed $k$ where $1\leq k <J$, $b_j=0$ for all $ 1 \leq j\leq k$ and $b_j >
0$ for all $k< j \leq J$, i.e., $(b_1, \ldots, b_{k})= (0,\ldots, 0)$ and $b_{k+1}>0$  Fix $\lambda=(n_{(J)})^{1/2}$.  Then, as
$N \rightarrow \infty$, $1(1 \leq t \leq k) \rightarrow 1 $ a.s.
\end{prop}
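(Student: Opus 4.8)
The plan is to reduce the whole statement to a single eventually-a.s.\ inequality. By definition $t$ is the smallest index in $\{1,\dots,J-1\}$ for which $n_{(t+1)} > N\hat{\theta}_t + \lambda$, with $t=0$ if no such index exists. Hence it suffices to show that $t=k$ itself satisfies this jump condition for all large $N$ almost surely: if it does, then a valid index exists (so $t\ge 1$, ruling out $t=0$) and the minimal valid index is at most $k$ (so $t\le k$), which together give $1\le t\le k$. Thus the proposition follows once I establish
$$ n_{(k+1)} > N\hat{\theta}_k + \lambda \quad\text{eventually a.s.} $$
Note that any such $t\le k$ is enough for consistency via Proposition~\ref{consistency}, since the $t$ smallest counts will be drawn from the zero-bias categories, so I do not need $t=k$ exactly.

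First I would establish the almost-sure separation of the order statistics. Writing $n_j\sim\text{Poisson}(N\mu_j)$ with $\mu_j=e^{b_j}\theta$, infinite divisibility lets me realize $n_j(N)$ as a partial sum of i.i.d.\ unit-mean $\text{Poisson}(\mu_j)$ variables, so the strong law gives $n_j/N\to\mu_j$ a.s.\ (equivalently, a Chernoff bound for the Poisson together with Borel--Cantelli over integer $N$). Because there are only finitely many categories, a.s.\ for all large $N$ every zero-bias count ($b_j=0$, $\mu_j=\theta$) lies within $\epsilon$ of $N\theta$ while every biased count ($b_j>0$, $\mu_j\ge e^{b_{k+1}}\theta$) exceeds $N(e^{b_{k+1}}\theta-\epsilon)$. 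Choosing $\epsilon$ small enough that $\theta+\epsilon < e^{b_{k+1}}\theta-\epsilon$ then forces all $k$ zero-bias counts strictly below all $J-k$ biased counts. Consequently the $k$ smallest order statistics $n_{(1)},\dots,n_{(k)}$ are exactly the zero-bias observations and $n_{(k+1)}$ is the smallest biased observation, so $\hat{\theta}_k=(kN)^{-1}\sum_{i=1}^k n_{(i)}\to\theta$ and $n_{(k+1)}/N\to e^{b_{k+1)}}\theta$ a.s.

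The second step is a routine rate comparison. With $\lambda=(n_{(J)})^{1/2}$ and $n_{(J)}/N\to e^{b_J}\theta$, we have $\lambda=O(\sqrt{N})$ a.s., so $\lambda/N\to 0$. Therefore
$$ \frac{1}{N}\Bigl( n_{(k+1)} - N\hat{\theta}_k - \lambda \Bigr) \;\longrightarrow\; e^{b_{k+1}}\theta - \theta - 0 \;=\; (e^{b_{k+1}}-1)\theta \;>\; 0 \quad\text{a.s.}, $$
the strict inequality holding because $b_{k+1}>0$. Multiplying back by $N$ shows the gap $n_{(k+1)}-N\hat{\theta}_k-\lambda$ diverges to $+\infty$, so the jump condition at $t=k$ holds for all large $N$ a.s., which by the reduction above yields $1(1\le t\le k)\to 1$ a.s.

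The main obstacle is not the final comparison but the careful, uniform handling of the order statistics under the correct mode of convergence: I must upgrade the in-probability separation of the two groups of counts into an almost-sure statement simultaneously over the finitely many read categories, which is exactly where the coupling (or Chernoff plus Borel--Cantelli) argument does the real work. Once the two groups separate a.s., identifying $\hat{\theta}_k$, $n_{(k+1)}$ and $\lambda$ with their limits is immediate, and the decisive fact is simply that $n_{(k+1)}$ grows like $N e^{b_{k+1}}\theta$ whereas the threshold $N\hat{\theta}_k+\lambda$ grows like $N\theta$ plus an $O(\sqrt{N})$ term, so the order-$N$ surplus $(e^{b_{k+1}}-1)\theta>0$ eventually swamps the $O(\sqrt{N})$ penalty contributed by $\lambda$.
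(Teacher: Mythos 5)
Your proof is correct and follows essentially the same route as the paper's: show that the zero-bias and biased counts separate almost surely so that $n_{(1)},\ldots,n_{(k)}$ are exactly the zero-bias observations, then verify the jump condition $n_{(k+1)} > N\hat{\theta}_k + \lambda$ eventually a.s., which forces $1 \leq t \leq k$. If anything you are more careful than the paper, which loosely invokes ``the CLT'' for an almost-sure statement where your coupling (or Chernoff plus Borel--Cantelli) argument is what is actually needed, and your order-$N$ gap versus $O(\sqrt{N})$ penalty comparison streamlines the paper's intermediate $\theta + c/\sqrt{N}$ step.
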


\begin{proof}
Fix $i, j$ where  $i \leq k$ and $j>k$, ie. $i$ corresponds to a
$b_i=0$ and $j$ corresponds to a $b_j>0$. The CLT implies $1(n_i>n_j) \rightarrow 0$ a.s.
 as $N\rightarrow \infty$.

Therefore, summing over all $i$ and $j$ satisfying $1 \leq i \leq k,k<j \leq J$,
$$ \sum_{1 \leq i \leq k,k<j \leq J} 1(n_i>n_j) \rightarrow 0
\mbox{ a.s.}$$

It follows that for any $i$ and $j$ with $i \leq k$ and $j>k$,

$$1(n_{(j)} = n_i) \rightarrow 0 \mbox{ a.s.}$$

and hence that for any constant $c>0$ not growing with $N$,

\begin{enumerate}
\item For all $ 1 \leq i  \leq k$,
$$\hat{\mu}_i \rightarrow \theta \mbox{ a.s.}$$ and
 \item For all $ k< j  \leq J$ (in particular, for $j=k+1$)
$$1(\hat{\mu}_j > \theta + \frac{c}{\sqrt{N}})
   \rightarrow 1 \mbox{ a.s.}$$
\end{enumerate}
which implies that
$$1( N\hat{\mu}_j > N \theta + \sqrt{N \theta e^{b_J}})\rightarrow 1 \mbox { a.s.}$$
Therefore,
$$1( n_{(j)} > N\hat{\theta}_{k} + \sqrt{N \hat{\mu}_{(J)}})\rightarrow 1 \mbox { a.s.  }$$
That is, $$1(n_{(k+1)} > N\hat{\theta}_k+\lambda )
\rightarrow 1 \mbox { a.s. }$$
which implies $$1(1\leq t \leq k) \rightarrow 1  \mbox{ a.s.}$$
\end{proof}

\section*{Acknowledgements}
HJ's research was supported in part by an NIH grant 5U54CA163059-02 and a
GAPPS Grant from the Bill \& Melinda Gates Foundation. JS was supported by
NIH grant 1K99CA16898701 from the NCI.

\bibliographystyle{abbrvnat}
\bibliography{robust-isoform-expression}

\begin{thebibliography}{19}
\providecommand{\natexlab}[1]{#1}
\providecommand{\url}[1]{\texttt{#1}}
\expandafter\ifx\csname urlstyle\endcsname\relax
  \providecommand{\doi}[1]{doi: #1}\else
  \providecommand{\doi}{doi: \begingroup \urlstyle{rm}\Url}\fi

\bibitem[{Black Pyrkosz} et~al.(2013){Black Pyrkosz}, {Cheng}, and {Titus
  Brown}]{Pyrkosz2013}
A.~{Black Pyrkosz}, H.~{Cheng}, and C.~{Titus Brown}.
\newblock {RNA-Seq Mapping Errors When Using Incomplete Reference
  Transcriptomes of Vertebrates}.
\newblock \emph{ArXiv e-prints}, Mar. 2013.

\bibitem[{ENCODE Project Consortium} et~al.(2012){ENCODE Project Consortium},
  Bernstein, Birney, Dunham, Green, Gunter, and Snyder]{ENCODE}
{ENCODE Project Consortium}, B.~E. Bernstein, E.~Birney, I.~Dunham, E.~D.
  Green, C.~Gunter, and M.~Snyder.
\newblock {An integrated encyclopedia of DNA elements in the human genome.}
\newblock \emph{Nature}, 489\penalty0 (7414):\penalty0 57--74, Sept. 2012.
\newblock ISSN 1476-4687.

\bibitem[Hansen et~al.(2010)Hansen, Brenner, and Dudoit]{Hansen2010}
K.~D. Hansen, S.~E. Brenner, and S.~Dudoit.
\newblock Biases in illumina transcriptome sequencing caused by random hexamer
  priming.
\newblock \emph{Nucleic Acids Res}, 38\penalty0 (12):\penalty0 e131, Jul 2010.

\bibitem[Harrow et~al.(2012)Harrow, Frankish, Gonzalez, Tapanari, Diekhans,
  Kokocinski, Aken, Barrell, Zadissa, Searle, Barnes, Bignell, Boychenko, Hunt,
  Kay, Mukherjee, Rajan, Despacio-Reyes, Saunders, Steward, Harte, Lin, Howald,
  Tanzer, Derrien, Chrast, Walters, Balasubramanian, Pei, Tress, Rodriguez,
  Ezkurdia, {van Baren}, Brent, Haussler, Kellis, Valencia, Reymond, Gerstein,
  Guig車, and Hubbard]{Harrow2012}
J.~Harrow, A.~Frankish, J.~M. Gonzalez, E.~Tapanari, M.~Diekhans,
  F.~Kokocinski, B.~L. Aken, D.~Barrell, A.~Zadissa, S.~Searle, I.~Barnes,
  A.~Bignell, V.~Boychenko, T.~Hunt, M.~Kay, G.~Mukherjee, J.~Rajan,
  G.~Despacio-Reyes, G.~Saunders, C.~Steward, R.~Harte, M.~Lin, C.~Howald,
  A.~Tanzer, T.~Derrien, J.~Chrast, N.~Walters, S.~Balasubramanian, B.~Pei,
  M.~Tress, J.~M. Rodriguez, I.~Ezkurdia, J.~{van Baren}, M.~Brent,
  D.~Haussler, M.~Kellis, A.~Valencia, A.~Reymond, M.~Gerstein, R.~Guig車, and
  T.~J. Hubbard.
\newblock Gencode: the reference human genome annotation for the encode
  project.
\newblock \emph{Genome Res}, 22\penalty0 (9):\penalty0 1760--1774, Sep 2012.

\bibitem[Hsu et~al.(2006)Hsu, Kent, Clawson, Kuhn, Diekhans, and
  Haussler]{Hsu2006}
F.~Hsu, W.~J. Kent, H.~Clawson, R.~M. Kuhn, M.~Diekhans, and D.~Haussler.
\newblock The ucsc known genes.
\newblock \emph{Bioinformatics}, 22\penalty0 (9):\penalty0 1036--1046, May
  2006.

\bibitem[Jiang and Wong(2009)]{Jiang2009}
H.~Jiang and W.~H. Wong.
\newblock Statistical inferences for isoform expression in rna-seq.
\newblock \emph{Bioinformatics}, 25\penalty0 (8):\penalty0 1026--1032, Apr
  2009.

\bibitem[Jiang et~al.(2010)Jiang, Wang, Dyer, and Wong]{Jiang2010}
H.~Jiang, F.~Wang, N.~P. Dyer, and W.~H. Wong.
\newblock Cisgenome browser: a flexible tool for genomic data visualization.
\newblock \emph{Bioinformatics}, 26\penalty0 (14):\penalty0 1781--1782, Jul
  2010.

\bibitem[Li and Dewey(2011)]{Li2011}
B.~Li and C.~N. Dewey.
\newblock Rsem: accurate transcript quantification from rna-seq data with or
  without a reference genome.
\newblock \emph{BMC Bioinformatics}, 12:\penalty0 323, 2011.

\bibitem[Li et~al.(2010)Li, Jiang, and Wong]{Li2010}
J.~Li, H.~Jiang, and W.~H. Wong.
\newblock Modeling non-uniformity in short-read rates in rna-seq data.
\newblock \emph{Genome Biol}, 11\penalty0 (5):\penalty0 R50, 2010.

\bibitem[L車pez-Bigas et~al.(2005)L車pez-Bigas, Audit, Ouzounis, Parra, and
  Guig車]{LopezBigas2005}
N.~L車pez-Bigas, B.~Audit, C.~Ouzounis, G.~Parra, and R.~Guig車.
\newblock Are splicing mutations the most frequent cause of hereditary disease?
\newblock \emph{FEBS Lett}, 579\penalty0 (9):\penalty0 1900--1903, Mar 2005.

\bibitem[Mortazavi et~al.(2008)Mortazavi, Williams, McCue, Schaeffer, and
  Wold]{Mortazavi2008}
A.~Mortazavi, B.~A. Williams, K.~McCue, L.~Schaeffer, and B.~Wold.
\newblock Mapping and quantifying mammalian transcriptomes by rna-seq.
\newblock \emph{Nat Methods}, 5\penalty0 (7):\penalty0 621--628, Jul 2008.

\bibitem[{Pachter}(2011)]{Pachter2011}
L.~{Pachter}.
\newblock {Models for transcript quantification from RNA-Seq}.
\newblock \emph{ArXiv e-prints}, Apr. 2011.

\bibitem[Pruitt et~al.(2009)Pruitt, Tatusova, Klimke, and Maglott]{Pruitt2009}
K.~D. Pruitt, T.~Tatusova, W.~Klimke, and D.~R. Maglott.
\newblock Ncbi reference sequences: current status, policy and new initiatives.
\newblock \emph{Nucleic Acids Res}, 37\penalty0 (Database issue):\penalty0
  D32--D36, Jan 2009.

\bibitem[Roberts et~al.(2011)Roberts, Trapnell, Donaghey, Rinn, and
  Pachter]{Roberts2011}
A.~Roberts, C.~Trapnell, J.~Donaghey, J.~L. Rinn, and L.~Pachter.
\newblock Improving rna-seq expression estimates by correcting for fragment
  bias.
\newblock \emph{Genome Biol}, 12\penalty0 (3):\penalty0 R22, 2011.

\bibitem[Salzman et~al.(2011)Salzman, Jiang, and Wong]{Salzman2011}
J.~Salzman, H.~Jiang, and W.~H. Wong.
\newblock Statistical modeling of rna-seq data.
\newblock \emph{Statistical Science}, 26\penalty0 (1):\penalty0 62--83, 2011.

\bibitem[Salzman et~al.(2012)Salzman, Gawad, Wang, Lacayo, and
  Brown]{Salzman2012}
J.~Salzman, C.~Gawad, P.~L. Wang, N.~Lacayo, and P.~O. Brown.
\newblock Circular rnas are the predominant transcript isoform from hundreds of
  human genes in diverse cell types.
\newblock \emph{PLoS ONE}, 7\penalty0 (2):\penalty0 e30733, 02 2012.

\bibitem[Salzman et~al.(2013)Salzman, Chen, Olsen, Wang, and
  Brown]{Salzman2013}
J.~Salzman, R.~E. Chen, M.~N. Olsen, P.~L. Wang, and P.~O. Brown.
\newblock Cell-type specific features of circular rna expression.
\newblock \emph{PLoS Genet}, 9\penalty0 (9):\penalty0 e1003777, 09 2013.

\bibitem[Trapnell et~al.(2010)Trapnell, Williams, Pertea, Mortazavi, Kwan, {van
  Baren}, Salzberg, Wold, and Pachter]{Trapnell2010}
C.~Trapnell, B.~A. Williams, G.~Pertea, A.~Mortazavi, G.~Kwan, M.~J. {van
  Baren}, S.~L. Salzberg, B.~J. Wold, and L.~Pachter.
\newblock Transcript assembly and quantification by rna-seq reveals unannotated
  transcripts and isoform switching during cell differentiation.
\newblock \emph{Nat Biotechnol}, 28\penalty0 (5):\penalty0 511--515, May 2010.

\bibitem[Wang et~al.(2008)Wang, Sandberg, Luo, Khrebtukova, Zhang, Mayr,
  Kingsmore, Schroth, and Burge]{Wang2008}
E.~T. Wang, R.~Sandberg, S.~Luo, I.~Khrebtukova, L.~Zhang, C.~Mayr, S.~F.
  Kingsmore, G.~P. Schroth, and C.~B. Burge.
\newblock Alternative isoform regulation in human tissue transcriptomes.
\newblock \emph{Nature}, 456\penalty0 (7221):\penalty0 470--476, Nov 2008.

\end{thebibliography}
\end{document}